\def\arrayin#1{\begin{array}{rcl}#1\end{array}}
\newcommand{\DD}{\mathit{\mathbb{I}}}
\newcommand{\DL}{\mathcal{GDL}(\A)}
\newcommand{\CDL}{\mathcal{CGDL}(\A)}
\newcommand{\FDL}{\Gamma}
\newcommand{\bool}{\mathbf{2}}
\newcommand{\cmodels}{\models_{\mathcal{CGDL}}}
\def\M{\mathbb{M}}
\newcommand{\Prop}{\mathrm{Prop}}
\newcommand{\Prog}{\mathrm{Prg}}
\newcommand{\Fm}{\mathrm{Fm}^{\FDL(\A)}}
\def\A{\mathbf{A}}
\def\AA{\mathcal{A}}
\newcommand{\Mod}{\mathrm{Mod}}
\def\prgint#1{\llbracket#1\rrbracket}
\newcommand{\REL}{\REL}
\def\square{\Box}
\def\PL{\mathit{PL}}
\def\REL{\mathit{REL}}
\def\H2PL{\mathcal{H}^2\PL}
\def\Lat{\mathbf{L}}
\def\un{\mathbf{1}}
\newcommand{\0}{\mathbf{0}}
\def\rcb#1#2#3#4{\def\nothing{}\def\range{#3}\mathopen{\langle}#1 \ #2 \ \ifx\range\nothing::\else: \ #3 :\fi \ #4\mathclose{\rangle}}
 \def\just#1#2{\\ &#1& \rule{2em}{0pt} \{
\mbox{\rule[-.7em]{0pt}{1.8em} \footnotesize{#2}} \} \\ && }
\def\justdois#1#2#3{\\ &#1& \rule{2em}{0pt} \{
  \mbox{\rule[-.7em]{0pt}{1.8em} \footnotesize{#2}} \\ &&
  \mbox{\rule[-.7em]{0pt}{1.8em} \footnotesize{#3}} \} \\ && 
  }
\DeclareMathAlphabet{\mathbb}{U}{msb}{m}{n}
\DeclareSymbolFont{ams}{U}{msa}{m}{n}
\DeclareSymbolFontAlphabet{\mathams}{ams}
\DeclareMathSymbol{\filter}{\mathams}{ams}{22}
\tikzstyle{obj} =[circle, minimum width=0.5cm, minimum height=0.5cm, draw=black]
\tikzset{
>=stealth',
  punktchain/.style={
    rectangle, 
    rounded corners, 
    fill=blue!20,
    draw=black, very thick,
    text width=10em, 
    minimum height=5em, 
    text centered, 
    on chain},
      punktchain2/.style={
        rectangle, 
        fill=yellow!20,
        draw=black, thick,
        text width=10em, 
        minimum height=2em, 
        text centered, 
        on chain},
         punktchain3/.style={
                rectangle,
                rounded corners,
                fill=green!20,
                draw=black, thick,
                text width=2em, 
                minimum height=2em, 
                text centered, 
                on chain},
                 punktchain4/.style={
                        rectangle, 
                        fill=yellow!20,
                        draw=black, thick,
                        text width=2em, 
                        minimum height=2em, 
                        text centered, 
                        on chain},
                         punktchain5/.style={
                                                rectangle,
                                                text width=6em, 
                                                minimum height=2em, 
                                                text centered
                                                },
  line/.style={draw, thick, <-},
  element/.style={
    tape,
    top color=white,
    bottom color=blue!50!black!60!,
    minimum width=8em,
    draw=blue!40!black!90, very thick,
    text width=10em, 
    minimum height=3.5em, 
    text centered, 
    on chain},
  every join/.style={->, thick,shorten >=1pt},
  decoration={brace},
  tuborg/.style={decorate},
  tubnode/.style={midway, right=2pt},
}
\begin{document}

\begin{frontmatter}

\title{On the construction of multi-valued concurrent dynamic logic}

\thanks{\tiny This work was founded by the ERDF --- European Regional Development Fund through the Operational Programme for Competitiveness and Internationalisation - COMPETE 2020 Programme and by National Funds through the Portuguese funding agency, FCT - Funda\c{c}\~ao para a Ci\^encia e a Tecnologia, within project \texttt{POCI-01-0145-FEDER-030947}.}

\author{Leandro Gomes}
	\institute{ HASLab INESC TEC - Univ. Minho, Portugal}

\end{frontmatter}

\maketitle

\begin{abstract}
Dynamic logic is a powerful framework for reasoning about imperative programs. An extension with a concurrent operator \cite{ConcurrentDL} was introduced to formalise programs running in parallel. In other direction, other authors proposed a systematic method for generating multi-valued propositional dynamic logics to reason about weighted programs \cite{Madeira2016}. This paper presents the first step of combining these two frameworks to introduce uncertainty in concurrent computations. In the developed framework, a weight is assigned to each branch of the parellel execution, resulting in a (possible) asymmetric parallelism, inherent to fuzzy programming paradigm \cite{DBLP:journals/artmed/VetterleinMA10,DBLP:journals/ijcisys/CingolaniA13}. By adopting such an approach, a family of logics is obtained, called \emph{multi-valued concurrent propositional dynamic logics} ($\CDL$), parametric on an action lattice $\A$ specifying a notion of ``weight'' assigned to program execution. Additionally, the validity of some axioms of CPDL is discussed in the new family of generated logics.

\end{abstract}

\section{Introduction}\label{sec:intro}

%!TEX root = main.tex

Over time, the different variations of dynamic logics developed went hand-in-hand with the very notion of its object, the \emph{program}. This resulted in a diverse myriad of dynamic logics tailored to specific programming paradigms. Examples include probabilistic \cite{Kozen1985}, concurrent \cite{ConcurrentDL}, quantum \cite{Baltag2012} and continuous \cite{Platzer} computations, and combinations thereof.
An example of another non-trivial paradigm is the fuzzy one \cite{DBLP:journals/artmed/VetterleinMA10,DBLP:journals/ijcisys/CingolaniA13}, where the execution of a program differs from both classical and probabilistic scenarios: a conditional statement may act as a concurrent execution with a weight associated to each branch. The formalisation of such behaviour encompasses two non-trivial computational settings: concurrency and uncertainty. An extensive research can be found in the literature on diverse formalisms to reason about programs running in parallel \cite{DBLP:conf/concur/HoareMSW09,DBLP:journals/jlp/JipsenM16} and to deal with uncertainty \cite{Kozen1985,DenHartog2002,Qiao2008,probProgUsingHoareL}. However, even when these two components are combined into a single framework \cite{DBLP:journals/corr/McIverRS13}, the uncertainty %is introduced in the non-deterministic choice operator,
models probabilistic nondeterminism. Thus we are still missing a proper semantics to describe the behaviour of the fuzzy paradigm.

Recently, reference \cite{Madeira2016} initiated a research agenda on the systematic development of multi-valued propositional dynamic logics, parametric on an action lattice, which defines both the computational paradigm where programs live, and the truth space where assertions are evaluated. %Later, a more algebraic approach \cite{DBLP:conf/sbmf/GomesMB17,SCP} led to generalisations of Kleene algebra with tests (KAT), also to give semantics to weighted computations.
%Other research line was concerned about how to reason about programs running in parallel. On the logical side,
Following another research line, an extension to propositional dynamic logic (PDL) was introduced in reference \cite{ConcurrentDL}, called \emph{concurrent propositional dynamic logic (CPDL)}, to reason about concurrent computations. In the models presented for this logic, the programs are interpreted as \emph{binary multirelations}, to describe a parallel execution from a state to a set of states.

Combining these research lines, this paper takes the first step on the development of a method to generate multi-valued concurrent propositional dynamic logics. As in \cite{Madeira2016}, the logics are parametric on a generic action lattice, to model both the computational domain and a (possible graded) truth space where the assertions about programs are evaluated.
First, the semantics of CPDL is adapted to model programs as weighted parallel executions, by introducing the concept of \emph{fuzzy multirelations}. That means that a program, in the new logics, is interpreted as a relation between a state and a fuzzy set of states. The intuition is that the weights of the fuzzy set may describe an execution probability of each branch of the program, an asymmetric parallel flow or even the energy/costs associated to each branch.
The second step of this paper consists on presenting the actual method of generating (parametric) multi-valued CPDL. %using a method analogous to \cite{Madeira}.
The family of the resulting logics is called $\CDL$.

%In general terms, the framework applies an analogous method of \cite{Madeira2016} to CPDL. In this scenario, programs are interpreted as fuzzy multirelations, where the weights associated with  intuitively describing a (possible asymmetric) parallel execution.

%This paper takes the first step on generalising CPDL, by adopting some ideas from the previous work mentioned above. As a starting point, and following \cite{ConcurrentDL}, we characterise the semantics where the programs for this logic are interpreted, by introducing the notion of \emph{fuzzy multirelation}; %such semantics constitutes the computational model where the programs of the logic are interpreted, in an analogy to \cite{Madeira2016}.
%Then, a (parametric) dynamic logic is induced by a complete action lattice $\A$, using a method analogous to \cite{Madeira}. The family of the generated logics is called $\CDL$.

% Built on these motivations, we resort to the systematic method of generating multi-valued PDL \cite{Madeira2016}, parametric on an action lattice. Such algebraic structure serves the purpose of giving semantics to both the space of computations, and a proper (multi-valued) valuation of the logic formulas. To model the space of computations where the programs are interpreted, we present a generalisation of the concept of binary multi relations, caleed \emph{fuzzy multirelations}.

This paper is organised as follows. Section \ref{sec:prelim} presents a brief background overview. Then, Section \ref{sec:mvcdl} starts to introduce fuzzy multirelations and defines some operations over them. Such algebra is the mathematical formalism in which programs are interpreted in the generated logics. The same section ends with the study of an axiomatisation for the generated logics. Finally, Section \ref{sec:conc} concludes and enumerates topics for future work.

\section{Preliminaries}\label{sec:prelim}

%!TEX root = main.tex

\subsection{Semantics for concurrency}

The semantics of CPDL is based on the concept of \emph{binary multirelation}. The relevant definition and some operators are recalled below.

	\begin{definition}[Binary multirelation \cite{DBLP:journals/tocl/FurusawaS16}]\label{def:binmrel}
	Given a set $X$, a binary multirelation is a subset of the cartesian product $X\times P(X)$, i.e. a set of ordered pairs $(a,A)$, where $a\in X$ and $A\subseteq X$. The following operations over multirelations are defined:

\begin{itemize}
	\item $R\cup S$ as the union of $R$ and $S$;
	\item the \emph{Peleg sequential composition}

$$R\cdot S=\Big\{(a,A)\mid \exists B.(a,B)\in R\wedge \exists f.(\forall b\in B. (b,f(b))\in S)\wedge A=\bigcup f(B)\Big\};$$
\item the \emph{parallel composition} % \cite{DBLP:journals/tocl/FurusawaS16}
$R\cap S=\{(a,A\cup B)\mid (a,A)\in R \wedge (a,B)\in S\}$.
\end{itemize}

\end{definition}

% \begin{definition}[Angelic and Demonic choice \cite{DBLP:journals/scp/MartinCR07}]\label{def:angelic_demonic}
% The \emph{angelic choice} of binary multirelations $R$ and $S$ is the union $R\cup S$.\\
% The \emph{demonic choice} of binary multirelations $R$ and $S$ is the intersection $R\cap S$.
% \end{definition}

%The formalism used in this paper resorts to Peleg's work on Concurrent dynamic logic \cite{ConcurrentDL}. The operations of sequential and parallel compositions were defined as \cite{DBLP:journals/tocl/FurusawaS16}:

Note that the union of binary multirelations is just the set union. %, as defined for modelling angelic nondeterminism.
The sequential composition operator is rather more complex. A pair $(a,A)$ belongs to the sequential composition of multirelations $R$ and $S$ if and only if $a$ is related with some intermediate set of states $B$ and every $b\in B$ must be related with some subset of $A$ such that the union of all those subsets is $A$. Finally, an element $(a,A)\in R\cap S$ indicates a parallel execution of a program from a state $a$ to a set of states in $A$, ``combining'' the arriving states of $R$ and $S$ into $A$. Note that such composition is dual to $R\cup S$, where $(a,B)$ and $(a,C)$ correspond to distinct executions. The first kind of choice in commonly called \emph{demonic}, while the latter is known as \emph{angelic}.

% Relatively to sequential composition we assume, in this work, an alternative definition, the \emph{Kleisli composition} \cite{DBLP:journals/jlp/FurusawaKST17}:

% \begin{equation}R\circ S=\Big\{(a;A)\mid \exists B\cdot(a,B)\in R\wedge A=\bigcup S(B) \Big\}\label{eq:kleisli}\end{equation}

% \noindent where $S(B)=\{C\in P(Z)\mid \exists b\in B\cdot(b,C)\in S\}$.

\subsection{Concurrent propositional dynamic logic}

Concurrent propositional dynamic logic (CPDL), as introduced in \cite{ConcurrentDL}, is an extension of PDL with a parallel operator $\cap$ added to the syntax of programs.% Hence, both its sets of programs and formulas are that of PDL plus the ones containing expressions generated by $\pi\cap \pi$.
 The semantics interprets programs as binary multirelations $R\subseteq W\times P(W)$, where composed programs are interpreted according to the operators of Definition \ref{def:binmrel}. Intuitively, an element $(a,A)$ of a binary multirelation expresses that the a program executed from a state $a$ ends in all states of $A$ in parallel. Models of CPDL consist of tuples $(W,V,\prgint{-})$ where $W$ is a set of states, $V$ is a valuation function which attributes a subset of $W$ to each atomic formula, and $\prgint{-}$ attributes a subset of $W\times P(W)$ to each atomic program. For instance, the formula $\langle\pi\rangle\rho$ holds in a state $w$ if and only if $\exists U\subseteq W$ s.t. $(s,U)\in \prgint{\pi}$ and $U\in V(\rho)$. For more details about the semantics of CPDL see \cite{ConcurrentDL}.
The axiom system of CPDL is that of PDL with the additional axiom
$\langle\pi_1\cap\pi_2\rangle \rho\equiv \langle\pi_1\rangle\rho\wedge \langle\pi_2\rangle\rho$ and restricting
$\langle\pi_0\rangle(\rho\vee\rho')\equiv \langle\pi_0\rangle \vee \langle\pi_0\rangle\rho'$ to atomic programs.

\subsection{Parametric construction of multi-valued dynamic logics}

Thus subsection provides a short review of the dynamisation method introduced in \cite{Madeira2016}. Let us start by revisiting the following definition:

 \begin{definition}[\cite{onactionalgebras}]\label{def:actionlattice}
 An \emph{action lattice} is a tuple
	%\[\A=(A,+,;,0,1,*,\leftarrow,\rightarrow,\cdot)\]
	$\A=(A,+,;,0,1,*,\rightarrow,\cdot)$, that is a residuated lattice with order $\leq$ induced by $+$: $a\leq b$ iff $a+b=b$, plus the axioms
	% \begin{align}
		$1+a+(a^*;a^*) \leq a^*$ and %\label{eqn10}\\
	 	$(x\rightarrow x)^* = x\rightarrow x$. %\label{eqn15}
\end{definition}

An action lattice is called a \emph{$\DD$-action lattice} when the identity of the $;$ operator %(axiom \ref{eqn6})
coincides with the greatest element of the residuated lattice, i.e. $1=\top$. Moreover, an action lattice $\A$ is complete when every subset of $\A$ has both supremum and infimum. Since operators $+$ and $;$ are associative, we can generalise them
to $n$-ary operators and use the notation $\sum$ and $\prod$ to represent their iterated
versions, respectively. The generation of dynamic logics illustrated in the Section \ref{sec:mvcdl} will be parametric on the class of complete action lattices, since completeness is required to ensure the existence of arbitrary suprema. The general construction of multi-valued dynamic logics is revisited bellow.\\

% Below we recall some properties of action lattices from \cite{Madeira2016}.

% \begin{lemma}\label{lem:lemma1}
% The following properties hold for any action lattice $\A$:

% \begin{eqnarray}
% a \leq b \; \& \; c\leq d & \Rightarrow & a+c \leq b+d \label{eq:eq33} \\
% 	a;(b\cdot c) & \leq & (a;b)\cdot (a;c)	 \label{eq:eq36}
% \end{eqnarray}

% \noindent For $I$ finite, we also have

% \begin{equation}
% 	\sum_{i\in I}(a_i\cdot b_i) \leq  \sum_{i\in I} a_i \cdot \sum_{i\in I} b_i
% \label{eq:eq45}
% \end{equation}

% \end{lemma}

% \begin{lemma}\label{lemma:prova0}
% 	The following properties hold in any $\DD$-lattice:
% 	\begin{enumerate}
% 		\item \label{lem:2.5}$1=a;b \Leftrightarrow a=1 \; \&\;  b=1$ 
%  		\item\label{lem:2.6} $a\leq b \Leftrightarrow (a\rightarrow b)=1$
%  	\end{enumerate}
% \end{lemma}

\noindent \textbf{Signatures.} Signatures of $\DL$ are pairs $(\Pi,\Prop)$ corresponding to the denotations of atomic programs and  propositions, respectively.

\medskip
\noindent \textbf{Formul\ae.}
The \emph{set of composed programs}, denoted by $\Prog(\Pi)$, contains all expressions generated by 
	$\pi \ni \pi_0 \,|\, \pi;\pi \,|\, \pi + \pi \,|\, \pi^*$
	for $\pi_0\in \Pi$. 
	Given a signature $(\Pi,\Prop)$, the $\DL$-formul\ae\, for $(\Pi,\Prop)$ are the ones generated by the grammar
$\rho \ni \top\,|\, \bot\,|\, p \,|\, \rho \vee \rho \,|\,\rho \wedge \rho \,|\, \rho \rightarrow \rho \,|\, \rho \leftrightarrow \rho \,|\,\langle \pi \rangle\rho\,|\,[\pi]\rho$
for $p\in \Prop$ and $\pi \in \Prog(\Pi)$. 
%We use the double directional arrow $\rho \rightleftarrow \rho'$ to denote  formulas $\rho \rightarrow \rho' ; \rho' \rightarrow \rho$
%Note that this corresponds to the \emph{positive} fragment of the propositional dynamic logic.
  
 \medskip\noindent \textbf{Semantics.} The space where the computations of $\DL$ are interpreted is given by the algebra
$\M_n(\A)=(M_n(\A),\textbf{+},\textbf{;},\mathbf{0},\mathbf{1},\textbf{*})$
\noindent where $M_n(\A)$ is the space of $(n \times n)$-matrices over $\A$, the operators $+$, $;$ are the usual matrix sum and multiplication, respectively, $\0$, $\un$ are the zero matrix and the identity matrix, respectively, and $\textbf{*}$ is the operator defined as in \cite{conway1971,kleenematrices}. The matrix representation of a program expresses, for each pair of states $s$, $s'$, the weight (e.g. probability, cost, uncertainty) of the program going from $s$ to $s'$.

% \begin{enumerate}
% 	\item $M_n(\A)$ is the space of $(n \times n)$-matrices over $\A$
% 	\item  for any $A,B \in M_n(\A)$, define $M=A \mathbf{+} B$ by $M_{i,j}=A_{i,j} + B_{i,j}$, $i,j \leq n$.
% 	\item for any $A,B \in M_n(\A)$, define $M=A\;\textbf{;}\;B$ by  $ M_{i,j}= \sum_{k=1}^{n}(A_{i,k};B_{k,j})$ for any $i,j\leq n$.
% 	\item $\mathbf{1}$ and $\mathbf{0}$ are the $(n \times n)$-matrices defined by ${\mathbf{1}}_{i,j}= \begin{cases}1 & \mbox{ if } i=j \\
% 	0 & \mbox{otherwise}\end{cases}$ and $\mathbf{\mathbf{0}}_{i,j}=0$, for any $i,j\leq n$.

% 	\item for any $M=[a]\in \M_1(\A)$, $M^\textbf{*}=[a^*]$; 

% 	for any ${M=\left[
%       \begin{array}{l|r}
%         A & B\\
%         \hline
%         C & D
%       \end{array}
%     \right]}\in M_n(\A)$, $n>1$, where $A$ and $D$ are square matrices, define \[M^\textbf{*}=\left[
%       \begin{array}{l|r}
%         F^{\textbf{*}} & F^{\textbf{*}}\,\textbf{;}\,B\,\textbf{;}\,D^{\textbf{*}}\\
%         \hline
%         D^*;C;F^* & D^{\textbf{*}} \textbf{+}(D^{\textbf{*}}\,\textbf{;}\,C\,\textbf{;}\,F^{\textbf{*}}\,\textbf{;}\,B\,\textbf{;}\,D^{\textbf{*}})
%       \end{array}
%     \right]\] where $F=A+B\,\textbf{;}\,D^{\textbf{*}}\,\textbf{;}\,C$. Note that this construction is recursively defined from the base case (where $n=2$) where the operations of the base action lattice $\A$ are used.
 
% \end{enumerate}

  $\DL$-models for a signature ($\Prop$,$\Pi$), denoted by $\Mod^{\DL}(\Pi,\Prop)$, consists of tuples	
		 $\AA=(W,V,(\AA_\pi)_{\pi\in \Pi})$ where $W$ is a finite set (of states), $V:\Prop \times W \rightarrow A$ is a valuation function, and  $\AA_\pi \in M_n(\A)$, with  $n$ standing for the cardinality of $W$.

The interpretation of a program $\pi\in \Prog(\Pi)$ in a model \\$\AA \in \Mod^{\DL}(\Pi,\Prop)$ is recursively defined, from the set of atomic programs $(\AA_\pi)_{\pi\in \Pi}$, as $\AA_{\pi;\pi'}=\AA_\pi \,\textbf{;} \,\AA_{\pi'},
	 \AA_{\pi + \pi'}=\AA_\pi \,\textbf{+}\, \AA_{\pi'}\; \text{and}\; 
	\AA_{\pi^*}=\AA_{\pi}^{\textbf{*}}$.\\

	\noindent \textbf{Satisfaction.} The (graded) satisfaction relation, for a model\\ $\AA\in
\Mod^{\DL}(\Pi,\Prop)$, with $\A$ complete, consists of a function\\ $\models \; : W \times \Fm(\Pi,\Prop) \rightarrow A$
recursively defined as follows: 
	\begin{itemize}
		\item $(w\models \top) = \top$
		\item $(w\models \bot) = \bot$
		\item $(w\models p) = V(p,w)$, for any $p\in \Prop$
		\item $(w\models \rho \wedge \rho') = (w\models \rho) \cdot (w\models \rho')$
		\item $(w\models \rho \vee \rho') = (w\models \rho) + (w\models \rho')$
		\item $(w\models \rho \rightarrow \rho') = (w\models \rho) \rightarrow (w\models \rho')$
		\item $(w\models \rho \leftrightarrow \rho') = (w\models \rho \rightarrow \rho');(w\models \rho' \rightarrow \rho)$
		\item $(w\models \langle \pi \rangle\rho) = \sum\limits_{w'\in W}\big(\AA_{\pi}(w,w');(w'\models \rho)\big)$
		\item $(w\models [\pi] \rho) = \prod\limits_{w'\in W}\big(\AA_{\pi}(w,w')\rightarrow(w'\models \rho)\big)$
	\end{itemize}
 %We say that $\rho$ is \emph{valid} when, for any model $\AA$, and for each state $w\in W$, $(w\models\rho) =\top$.

 The (graded) satisfaction in a given state gives the degree of certainty of a formula in such state. For instance $M,w\models\langle\pi\rangle\rho$ gives the certainty that $\rho$ is achieved from state $w$ through the execution of $\pi$.
It is relevant to note that $\DL$ is a generalisation of PDL, for each action lattice $\A$. In particular, by considering the Boolean lattice, the generated logic $\mathcal{GDL}(\mathbf{2})$ coincides with PDL.% In this case, the ``weights'' associated to the execution of a program indicate the presence (or absence) of a transition between states. Relatively to the satisfaction relation, the possible two values ($\top$ and $\bot$) represent the classic ``true'' and ``false''.

\section{Multi-valued concurrent dynamic logic}\label{sec:mvcdl}

%!TEX root = main.tex

% To characterise the space of computations where programs are to be interpreted in this logic, we

% This subsection introduces the semantics where programs, in CPDL, are to be interpreted. Such is formalise dby the notion of \emph{fuzzy binary multirelations}.

Before presenting the construction of the logic, we introduce the mathematical formalism to define the model where the programs will be interpreted.

\subsection{Fuzzy binary multirelations}

\begin{definition}[Fuzzy set \cite{ZADEH1965338}]\label{def:fuzzyset}
	Given a set $X$ and a complete residuated lattice $\Lat$, a \emph{fuzzy subset} of $X$ is a function $\phi :X\to L$; $\phi(x)$ defines the \emph{membership degree of $x$ in $\phi$}. The set of all fuzzy subsets of $X$ is denote as $L^X$.
% \end{definition}
% \begin{definition}[Fuzzy relation \cite{ZADEH1965338}]\label{def:fuzzyrel}
	% Let $X_1,X_2,...,X_n$ be sets. A \emph{fuzzy relation} $\mu$ between $X_1,X_2,...,X_n$ is a fuzzy subset of the cartesian product $X_1\times X_2\times ...X_n$.
The support of $\phi$ is a fuzzy subset $\psi$ such that $\psi(x)>0$, $\forall x\in X$.

\end{definition}

% For each $x_1\in X_1,x_2\in X_2,..., x_n\in X_n$, $\mu(x_1,x_2,...,x_n)$ can be interpreted as the truth value of how elements $x_1,x_2,...,x_n$ are related by $\mu$. Therefore, as fuzzy sets model collections of objects, where there is a degree to which an object belongs to a collection, fuzzy relations model relationships between objects up to some membership degree. For the purpose of this work, we consider only binary fuzzy relations. %So, every time we mention the term \emph{fuzzy relation}, we are referring to fuzzy subsets of the cartesian product $X_1\times X_2$ .

% \begin{definition}[Binary multirelation \cite{DBLP:conf/RelMiCS/Rewitzky03}]\label{def:binmrel}
% 	Given a set $X$, a binary multirelation is a subset of the cartesian product $X\times P(X)$, i.e. a set of ordered pairs $(a,A)$, where $a\in X$ and $A\subseteq X$.
% \end{definition}

Since an action lattice is an extension of a residuated lattice, the concept of fuzzy set can be defined as well for the former. Such is the case for all the remaining formalisms introduced in this paper.

\begin{definition}[Fuzzy binary multirelation]\label{def:fuzzymrel}
	Given a set $X$ and a complete action lattice $\A$ over carrier $A$, a \emph{fuzzy binary multirelation} $R$ over $X$ is a set  $R\subseteq X\times A^X$. The following operations for fuzzy binary multirelations are defined:
	
	\begin{itemize}
		\item $R\cup S$ as the union of $R$ and $S$;
		%\item $R\cdot S=\{(a,\phi)\}$, with $\phi(a)=\sum_{x\mid(x,\phi_i)\in S}\phi(x);\phi_i(a)$, $i\in \mathbb{N}$;
		%\item $R\cdot S=\{(a,\phi)\mid\exists B.(a,B)\in R\mbox{ and } \forall b\in B. (b,A)\in S\}$, called the composition of multirelations;
		 \item %the \emph{sequential composition}
		 $R\cdot S =\Big\{(a,\phi)\mid %(a,\phi_a)\in R
         \phi(c)=\sum_{(a,\phi_a)\in R}\Big(\prod_{(b,\phi_b)\in S} \phi_a(b);\phi_b(c) \Big)\Big\}$
		\item $R \cap S = \{(a,\phi_R\cup\phi_S)\mid (a,\phi_R)\in R \text{ and } (a,\phi_S)\in S\}$, where $\phi_R\cup\phi_S$ is the union of fuzzy sets $\phi_R$ and $\phi_s$, as defined in \cite{ZADEH1965338};
		\item $R^*=\bigcup\{R^n:n\geq 0\}$.
	\end{itemize}

\end{definition}
We denote by $M(X)$ the set of all fuzzy binary multirelations over $X$.

Note, particularly, how this definition generalises the concept of binary multirelations, particularly to the role of lattice $\A$. This structure supports a set of truth values beyond the classical true and false, which are associated to the elements of the second component of $R$. %This way they represent ``internal'' truth values, while the fuzzy multirelation itself is an ordinary subset.% Another way would be to consider a fuzzy multirelation as being a fuzzy subset of $X\times P(X)$, or even a fuzzy subset of $X\times A^X$, the latter with two ``levels of fuzziness''. Nevertheless Definition \ref{def:fuzzymrel} is the most suitable for the purpose of this work.
By using such formalisation we are able to model a program as an execution with multiple ``arrows'' leaving a state to a set of states in parallel, with a (possible different) fuzziness degree associated with each ``arrow''.% Such notion would describe a (possible) assymetric notion of parallelism, where each program is the representation of multiple (possible different) flows of an execution.
Note that if $\A$ is the Boolean lattice $\mathbf{2}$, any fuzzy multirelation $R\subseteq X\times \mathbf{2}^X$ is a binary multirelation. Since the goal is still to model programs as binary input-output relations, only the binary case is considered, and thus the remaining of this paper refers to fuzzy binary multirelations simply as fuzzy multirelations. Another aspect that is relevant for the formalisation of the logics is the restriction to fuzzy multirelations $R\subseteq X\times A^X$ where the fuzzy set $\phi$ in $A^X$ is defined such that $\phi(x)>0$, $\forall x\in X$. In other words, we take only the support of fuzzy sets for the fuzzy multirelations considered in this paper.

The operations for fuzzy multirelations are interpreted buying intuitions from the classic definition. One such case is the operator $\cup$, which corresponds to the classical set union. %The sequential composition %contains all pairs $(a,\phi)$ such that $\phi$ is defined according to the interpretation of sequential composition for the classical case:
%a pair $(a,A)$ is in $R\cdot S$ if and only if there exists ($\sum$) a (fuzzy) set $\phi_a$ that is related with $a$ and every state ($\prod$) in $\phi$ is related with some (fuzzy) set $\phi_b$.
Regarding the sequential composition, the expression for $\phi$ computes the weight of an execution that starts from a state $a$, arrives at a set of intermediate states $\phi_a$ and ends in a set of states $\varphi_b$.
The parallel composition considers the union of fuzzy sets for computing the external choice, which is just a generalisation of the set union  used for CPDL. 

% Next we present some operations over fuzzy multirelations.

%\subsection{Operations on fuzzy multirelations}

%\begin{definition}[Operations on fuzzy multirelations]\label{def:opfmultirel} Given a set $X$, a complete residuated lattice $\Lat$ over carrier $L$, a fuzzy set $\phi:X\to L$ and fuzzy multirelations $R$, $S$ over $X$,

	\subsection{Parametric construction of multi-valued concurrent dynamic logics}

	Each complete action lattice $\A$ induces a multi-valued, concurrent propositional dynamic logic $\CDL$, with weighted computations interpreted over $\A$. Its signature, formul\ae, semantics and satisfacton are presented below.\\

	\noindent \textbf{Signatures.} Signatures of $\CDL$ are pairs $(\Pi,\Prop)$ corresponding to the denotations of atomic programs and  propositions, respectively.

	\medskip
	\noindent \textbf{Formul\ae.}
	The \emph{set of composed programs}, denoted by $\Prog(\Pi)$, consists of all expressions generated by 
	$\pi \ni \pi_0 \,|\, \pi;\pi \,|\, \pi \cap \pi \,|\, \pi + \pi \,|\, \pi^*$, 
	for $\pi_0\in \Pi$. 
	Given a signature $(\Pi,\Prop)$, the $\CDL$-formul\ae\, for $(\Pi,\Prop)$, denoted by \\$\Fm(\Pi,\Prop)$, are the ones generated by the grammar\\
	$\rho \ni \top\,|\, \bot\,|\, p \,|\, \rho \vee \rho \,|\,\rho \wedge \rho \,|\, \rho \rightarrow \rho \,|\, \rho \leftrightarrow \rho \,|\,\langle \pi \rangle\rho$, %|\],[\pi]\rho\]
	for $p\in \Prop$ and $\pi \in \Prog(\Pi)$.

\medskip
	\noindent \textbf{Semantics.} The space where the programs are interpreted is given by the set of all fuzzy multirelations over a set of states $W$, denoted by $M(W)$, and the operations over its elements, as in Definition \ref{def:fuzzymrel}.

	$\CDL$-models for a signature $(\Pi,\Prop)$ are tuples $M=(W,V,\prgint{-})$ where $W$ is a set of states, $V$ is a valuation function $V:\Prop\times W\to A$ and $\prgint{-}$ attributes a fuzzy multirelation $R\subseteq W\times A^W$ to each atomic program.

	The interpretation of a program $\pi\in Prg(\Pi)$ in a model $M$ is recursively defined as:

 $\prgint{\pi;\pi'}=\prgint{\pi} \,\mathbf{\cdot} \,\prgint{\pi'},
	\prgint{\pi\cap\pi'}=\prgint{\pi} \,\mathbf{\cap} \,\prgint{\pi'}, 
	 \prgint{\pi + \pi'}=\prgint{\pi} \,\mathbf{\cup}\, \prgint{\pi'}\; \text{and}\; 
	\prgint{\pi^*}=\prgint{\pi}^{\textbf{*}}$.

	The satisfaction relation for a model $M=(W,V,\prgint{-})$
 is given by the valuation function $\cmodels:W\times \Fm(\Pi,\Prop) \rightarrow A$
 recursively defined as: 
	\begin{itemize}
		\item $(w\cmodels \top) = \top$
		\item $(w\cmodels \bot) = \bot$
		\item $(w\cmodels p) = V(p,w)$, for any $p\in \Prop$
		\item $(w\cmodels \rho \wedge \rho') = (w\cmodels \rho) \cdot (w\cmodels \rho')$
		\item $(w\cmodels \rho \vee \rho') = (w\cmodels \rho) + (w\cmodels \rho')$
		\item $(w\cmodels \rho \rightarrow \rho') = (w\cmodels \rho) \rightarrow (w\cmodels \rho')$
		\item $(w\cmodels \rho \leftrightarrow \rho') = (w\cmodels \rho \rightarrow \rho');(w\cmodels \rho' \rightarrow \rho)$
		\item $(w\cmodels \langle \pi \rangle\rho) = \sum\limits_{\phi\mid(w,\phi)\in\prgint{\pi}}\Big(\prod\limits_{u\in U}\big(\phi(u);(u\cmodels \rho)\big)\Big)$
		\item $(w\cmodels [\pi] \rho) = \prod\limits_{\phi\mid(w,\phi)\in\prgint{\pi}}\Big(\prod\limits_{u\in U}\big(\phi(u)\rightarrow(u\cmodels \rho)\big)\Big)$
	\end{itemize}

	\noindent where $U\subseteq W$.
 We say that $\rho$ is \emph{valid} when, for any model $M$, and for each state $w\in W$, $(w\cmodels\rho) =\top$.

The satisfaction of formula $(w\cmodels\langle\pi\rangle\rho)$ is given by the weight of some fuzzy set $\phi$ which is related with state $w$ by some fuzzy multirelation, and that of $\rho$ for every state of the domain of $\phi$. % Such interpretation is described by the definition presented in \cite{ConcurrentDL} for the diamond operator
 Moreover, the satisfaction for the box operator follows \cite{Madeira2016}, where every execution of the program must lead to a set of states all of which satisfy $\rho$.

As mentioned in Section \ref{sec:prelim}, the axiomatisation of CPDL was presented as being that of PDL, except for one that is restricted to atomic programs, plus an additional axiom for concurrency. Below we study such axiomatisation in the new models presented for $\CDL$.

%Now the validity of some axioms for the generated logics is studied.
First, Lemma \ref{lem:lemma3} provides some auxiliary properties used to prove next lemma.

\begin{lemma}\label{lem:lemma3}
Let $\AA$ be a complete $\DD$-action lattice. Then
	\begin{description}
		\item[(1.1)] $(w\cmodels \rho \rightarrow \rho')=\top$ iff $(w\cmodels\rho)\leq(w\cmodels\rho')$
		\item[(1.2)] $(w\cmodels \rho \leftrightarrow\rho')=\top$ iff  $(w\cmodels\rho)=(w\cmodels\rho')$
	\end{description}	
\end{lemma}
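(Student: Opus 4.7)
The plan is to prove (1.1) first, then bootstrap (1.2) from it by exploiting the fact that in a $\DD$-action lattice the sequential product is bounded above by each of its factors.

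For (1.1), I would unfold the semantics to get $(w\cmodels\rho\rightarrow\rho')=a\rightarrow b$, with the abbreviations $a=(w\cmodels\rho)$ and $b=(w\cmodels\rho')$. The key identity is the residuation adjunction $x;y\leq z$ iff $y\leq x\rightarrow z$, which is part of the definition of an action lattice. Instantiating at $x=a$, $y=1$, $z=b$, and using that $1$ is the identity of $;$ gives $a\leq b$ iff $1\leq a\rightarrow b$. Because $\A$ is a $\DD$-action lattice, $1=\top$, so $1\leq a\rightarrow b$ holds exactly when $a\rightarrow b=\top$. Chaining these equivalences yields $(w\cmodels\rho\rightarrow\rho')=\top$ iff $a\leq b$.

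For (1.2), the semantics gives $(w\cmodels\rho\leftrightarrow\rho')=(a\rightarrow b);(b\rightarrow a)$. The forward direction is the part that needs attention: I want to conclude that if this product is $\top$, then each factor equals $\top$. The auxiliary fact to invoke is that in any $\DD$-action lattice, $x;y\leq x$ and $x;y\leq y$ for all $x,y$. This follows from monotonicity of $;$ (a standard consequence of residuation) together with $x,y\leq\top=1$, giving $x;y\leq x;1=x$ and $x;y\leq 1;y=y$. Applying this to $x=a\rightarrow b$ and $y=b\rightarrow a$, if $(a\rightarrow b);(b\rightarrow a)=\top$ then both factors dominate $\top$, hence both equal $\top$, and (1.1) then yields $a\leq b$ and $b\leq a$. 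Conversely, if $a=b$, then by (1.1) both residuals are $\top$, and $\top;\top=1;1=1=\top$ closes the argument.

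The main obstacle, such as it is, lies in the forward direction of (1.2): one must justify why $(a\rightarrow b);(b\rightarrow a)=\top$ forces each factor to be $\top$ individually, which is false in a generic residuated lattice but holds here thanks to the $\DD$-hypothesis $1=\top$. Making that monotonicity step explicit is the only delicate point; everything else is a direct unfolding of the semantic clauses together with the residuation adjunction already available from Definition~\ref{def:actionlattice}.
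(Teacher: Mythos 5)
Your proof is correct, and it follows essentially the same route as the argument the paper delegates to the cited source \cite{Madeira2016} (the paper's own ``proof'' is just that citation): unfold the semantic clauses, use the residuation adjunction together with $1=\top$ to get (1.1), and derive (1.2) from (1.1) via $x;y\leq x$ and $x;y\leq y$, which indeed hold because $;$ is monotone and $1=\top$ in a $\DD$-action lattice. Your explicit handling of the forward direction of (1.2) is exactly the delicate point the omitted proof relies on, so there is nothing to add.
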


\begin{proof}
Analogous to \cite{Madeira2016}.
\medskip
\hfill $\square$
\end{proof}

% The following results state the validity of some axioms of CPDL.

\begin{lemma}\label{lem:lemma4}
Let $\A$ be a a complete $\DD$-action lattice. The following are valid formul\ae\, in any $\CDL$:
	\begin{description}

		\item[(2.1)] $\langle \pi_0\rangle (\rho \vee \rho') \leftrightarrow \langle \pi_0\rangle \rho \vee \langle \pi_0\rangle \rho'$\
		\item[(2.2)] $\langle \pi\rangle (\rho \wedge \rho') \rightarrow \langle \pi\rangle \rho \wedge \langle \pi\rangle \rho'$
		\item[(2.3)] $\langle \pi + \pi'\rangle \rho \leftrightarrow \langle \pi\rangle \rho \vee \langle \pi\rangle \rho$
		\item[(2.4)] $\langle \pi\rangle \bot \leftrightarrow \bot$
		\item[(2.5)] $\langle \pi\cap\pi'\rangle\rho\leftrightarrow \langle\pi\rangle\rho\wedge\langle\pi'\rangle\rho$
		\item[(2.6)] $[\pi+\pi']\rho\leftrightarrow[\pi]\rho\wedge[\pi']\rho$
		\item[(2.7)] $[\pi](\rho\wedge\rho')\rightarrow[\pi]\rho\wedge[\pi]\rho'$
	
	\end{description}

\end{lemma}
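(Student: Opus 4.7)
The plan is to use Lemma~\ref{lem:lemma3} as a uniform reduction: validity of a biimplication $\phi\leftrightarrow\psi$ becomes the pointwise equality $(w\cmodels\phi)=(w\cmodels\psi)$ at every state in every model, and validity of an implication $\phi\rightarrow\psi$ becomes the corresponding $\leq$ in the carrier of $\A$. From there each item is attacked by unfolding the satisfaction clauses and invoking standard identities in a complete $\DD$-action lattice: left distributivity of $;$ over $+$, distributivity of $\cdot$ over $\sum$, monotonicity, and the defining equations for $\prgint{\pi+\pi'}$ and $\prgint{\pi\cap\pi'}$ supplied by Definition~\ref{def:fuzzymrel}.

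I would dispose of the routine items first. For (2.4), $(u\cmodels\bot)=\bot$ forces every factor $\phi(u);\bot$ to equal $\bot$, so the inner $\prod$ and the outer $\sum$ both collapse. For (2.3) and (2.6), $\prgint{\pi+\pi'}=\prgint{\pi}\,\mathbf{\cup}\,\prgint{\pi'}$ splits the outer $\sum$ (resp.\ $\prod$) as $\sum_{(w,\phi)\in\prgint{\pi}}+\sum_{(w,\phi)\in\prgint{\pi'}}$ (resp.\ the corresponding product), matching the clauses for $\langle\pi\rangle\rho\vee\langle\pi'\rangle\rho$ and $[\pi]\rho\wedge[\pi']\rho$. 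Items (2.2) and (2.7) are one-directional: from $(u\cmodels\rho\wedge\rho')\leq(u\cmodels\rho)$ and symmetrically for $\rho'$, monotonicity of $;$, $\sum$, $\prod$ and of $\rightarrow$ in its second argument, the value of each left-hand side is below each conjunct on the right, hence below their meet.

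Item (2.1), restricted to atomic $\pi_0$, is handled by expanding $(u\cmodels\rho\vee\rho')=(u\cmodels\rho)+(u\cmodels\rho')$, applying left distributivity of $;$ over $+$ to each inner factor, and then pushing the outer $\sum$ across the resulting $+$. This last manipulation is precisely what the restriction to atomic programs buys: as in classical CPDL, it relies on a structural simplicity of atomic programs (under the expected convention, singleton support of the associated fuzzy sets) which makes the inner $\prod$ decompose cleanly over $+$; outside the atomic case the interchange would fail because $\prod_u(a_u+b_u)$ is strictly larger than $(\prod_u a_u)+(\prod_u b_u)$ in general.

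The main obstacle is (2.5). Using Definition~\ref{def:fuzzymrel}, $(w,\phi)\in\prgint{\pi\cap\pi'}$ iff $\phi=\phi_R\cup\phi_S$ with $(w,\phi_R)\in\prgint{\pi}$, $(w,\phi_S)\in\prgint{\pi'}$ and $(\phi_R\cup\phi_S)(u)=\phi_R(u)+\phi_S(u)$. The left-hand side then becomes
\[\sum_{(w,\phi_R),(w,\phi_S)}\;\prod_{u\in \mathrm{supp}(\phi_R)\cup\mathrm{supp}(\phi_S)}\bigl((\phi_R(u)+\phi_S(u));(u\cmodels\rho)\bigr),\]
while the right-hand side, after distributing $\cdot$ over each $\sum$, is
\[\sum_{(w,\phi_R),(w,\phi_S)}\Bigl(\prod_{u\in\mathrm{supp}(\phi_R)}\phi_R(u);(u\cmodels\rho)\Bigr)\cdot\Bigl(\prod_{u\in\mathrm{supp}(\phi_S)}\phi_S(u);(u\cmodels\rho)\Bigr).\]
Matching these requires a case split on whether $u$ lies in $\mathrm{supp}(\phi_R)\cap\mathrm{supp}(\phi_S)$, $\mathrm{supp}(\phi_R)\setminus\mathrm{supp}(\phi_S)$ or $\mathrm{supp}(\phi_S)\setminus\mathrm{supp}(\phi_R)$, and, crucially, an auxiliary algebraic lemma showing that in a $\DD$-action lattice the identity $1=\top$ collapses the duplicated factors on the shared region. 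I expect this case analysis, together with the supporting lemma on how $\prod_{u}((a_u+b_u);c_u)$ relates to $\prod_{u}(a_u;c_u)\cdot\prod_{u}(b_u;c_u)$, to be the technically most demanding step of the whole proof.
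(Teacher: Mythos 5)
Your reduction via Lemma~\ref{lem:lemma3} and your treatment of (2.1)--(2.4), (2.6) and (2.7) essentially coincide with the paper's proof: (2.3) and (2.6) by splitting the outer $\sum$ (resp.\ $\prod$) over $\prgint{\pi+\pi'}=\prgint{\pi}\cup\prgint{\pi'}$, (2.4) by $a;0=0$ and $a+0=a$, and (2.1) by distributing $;$ over $+$ together with exactly the same singleton reading of atomic programs that the paper invokes. For (2.2) and (2.7) your monotonicity-plus-meet argument is a harmless variant of the paper's explicit use of $a;(b\cdot c)\leq(a;b)\cdot(a;c)$ and $\sum_{i}(a_i\cdot b_i)\leq\sum_{i}a_i\cdot\sum_{i}b_i$, since $\cdot$ is the lattice meet.

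The genuine gap is (2.5): you do not prove it. You reduce it to a case analysis on $\mathrm{supp}(\phi_R)$ and $\mathrm{supp}(\phi_S)$ plus an unstated auxiliary lemma exploiting $1=\top$, and you explicitly leave both open (``the technically most demanding step''). Moreover, the two reformulations you do write down are not licensed in a general complete $\DD$-action lattice: pulling $\cdot$ inside the outer sums of the right-hand side is only an inequality (Lemma~\ref{lem:lemma1} gives $\sum_{i}(a_i\cdot b_i)\leq\sum_{i}a_i\cdot\sum_{i}b_i$, not equality), and the identity you would need on the shared region, $\prod_{u}\big((a_u+b_u);c_u\big)=\prod_{u}(a_u;c_u)\cdot\prod_{u}(b_u;c_u)$, already fails in the G\"odel chain (with a single $u$ and $a=0.2$, $b=0.8$, $c=1$ one gets $0.8$ on the left and $0.2$ on the right), so the matching you envisage cannot go through as stated. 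The paper's proof of (2.5) takes a different and much shorter route: after unfolding $\cmodels$, it rewrites the single sum over $\{\phi\mid(w,\phi)\in\prgint{\pi\cap\pi'}\}$ directly as the product of the two sums indexed by $\prgint{\pi}$ and $\prgint{\pi'}$, justified in one step by the definition of $\prgint{\pi\cap\pi'}$, with no support case split and no auxiliary lemma. As submitted, therefore, item (2.5) of the lemma is not established by your proposal.
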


\begin{proof}
	The proof uses the satisfaction function $\cmodels$ and some axioms and properties of action lattices. The technical details are documented in Appendix \ref{sec:appendix}.
	\medskip
	\hfill $\square$
	
\end{proof}

\section{Conclusion}\label{sec:conc}

We took, in this paper, the first step in order to develop a rigorous and systematic formalism for the verification of weighted concurrent systems, motivated by the fuzzy case. The approach is based on the combination of some ideas from previous research \cite{DBLP:conf/sbmf/MadeiraNMB14,ConcurrentDL,SCP} to characterise both the computational and logical settings on top of which a proper (axiomatic, denotational and operational) semantics for fuzzy programs will be developed, in future work.

%The attentive reader must have noticed the absence of $[\_]$-modalities, which is typically part of PDL, in the grammar of the logic. Since the presented work consists on the first step to generalise CPDL, we aim at studying additional definitions in future work. Such is the case also for the soundness of the logic: the validity of axioms involving the $[\_]$-modality, as well as operators $;$ and $^*$ will be studied in the future.

There are numerous research lines that were left open and are worth to pursue in the near future. The most obvious is the study of a proper complete axiomatisation for the generated logics. In particular, the validity of the remaining axioms of CPDL, namely the ones involving operators $;$ and $^*$, will be analysed in the new models. Another relevant path to be followed would be to study the relations between PDL, CPDL and their graded variants. In one direction, we propose to investigate whether CPDL can be obtained from $\CDL$ by taking $\bool$ as lattice. Other would be to study if there is a way to obtain multi-valued PDL as special case of $\CDL$, such that there is a correspondence between the operations for fuzzy multirelations and operations on matrices. Additionally, relevant results about decidability and complexity of the logics are naturally in our agenda.

%Beyond this scenario, binary multirelations are versatile enough to be used in other contexts, such as to formalise logics for games of choice \cite{MartinCR07}, or as an alternative to predicate transformers for reasoning about programs \cite{Rewitzky03}.

Although we base our definition of sequential composition for fuzzy multirelations in that of Peleg, there are other versions of the operator worth to be analysed. One corresponds to the definition introduced for giving semantics to Parikh's game logic \cite{DBLP:conf/focs/Parikh83}

$$R\cdot S=\Big\{(a,A)\mid \exists B.(a,B)\in R\wedge \exists f.(\forall b\in B. (b,A)\in S)\}$$

\noindent It is clearly stronger than Peleg's, since it requires that every intermediate state $b$ must be related with the arriving set of states $A$. Another one, the Kleisli composition, was later studied in \cite{DBLP:journals/jlp/FurusawaKST17}. It is our goal to introduce proper generalisations of such operations, with possible applications in scenarios like a graded variant of game logics, as well as the development of axiomatic systems for each variation.

Finally, we propose to adapt the models of the generated logics in order to allow the introduction of assignments of variables to values in a given data domain. The goal is to develop (parametric) logics for the verification of programs written in a fuzzy imperative programming language, such as \cite{DBLP:journals/artmed/VetterleinMA10} or \cite{DBLP:journals/ijcisys/CingolaniA13}.

%\textcolor{red}{Moreover, it would be nice to see a formal proof that the proposed logic is a proper extension of the underlying logics, i.e. that CDPL and multi-valued PDL are special cases of the new logic. I see that for getting CDPL one must take as lattice the Boolean lattice 2. But how can we get multi-valued PDL as special case such that the operations for fuzzy binary relations correspond to operations on matrices?}

%An interesting case is the analysis of axiom \ref{eq:ax2}, and how the restriction to atomic programs could be handled in the semantics of fuzzy multirelations.

% Instead of focusing on a specific lattice to characerise the computaional model and the truth space (for instance the interval $[0,1]$ with appropriate operators) we opt by following previous work \cite{Madeira2016} to present a generic, parametric method for generating multi-valued CDL.

% \TODO{1. Referir que este trabalho surge no contexto {}do desenvolvimento de uma tese de doutoramento que se suporta em generalizações de lógicas dinamicas, sendo isto uma primeira tentativa the generalização da lógica do Peleg.\\	
% 2. Discussão sobre o axioma (2), no que toca à restrição para programas atómicos (estudo futuro sobre o porquê desta restrição)\\
% 3. apresentação da definição de uma modalidade apenas (deixando o estudo do box para trabalho futuro\\)
% 4. apresentação apenas de alguns axiomas da PDL (trabalho futuro para os outros, incluindo os axiomas que envolvem o box e o star)}

\bibliographystyle{acm}
 \bibliography{thesis}

 \newpage
 \appendix
 %!TEX root = main.tex

\section{Appendix}\label{sec:appendix}

The proofs here presented resort to some axioms of action lattices, enumerated below, and properties stated by Lemma \ref{lem:lemma1}.

\begin{eqnarray*}
				a+ (b +c) & = & (a+ b) +c \label{eqn1}\\
				 a +b & = & b + a \label{eqn2}\\
				 a + 0 & = &  0 + a  = a\label{eqn4}\\
  				a;(b +c ) & =& (a;b) + (a;c) \label{eqn7}\\
				a;0 & = & 0;a = 0 \label{eqn9}
\end{eqnarray*}

 \begin{lemma}\label{lem:lemma1}
 The following properties hold for any action lattice $\A$:

 \begin{eqnarray}
 a \leq b \; \& \; c\leq d & \Rightarrow & a+c \leq b+d \label{eq:eq33} \\
 	a;(b\cdot c) & \leq & (a;b)\cdot (a;c)	 \label{eq:eq36}
 \end{eqnarray}

 \noindent For $I$ finite, we also have

 \begin{equation}
 	\sum_{i\in I}(a_i\cdot b_i) \leq  \sum_{i\in I} a_i \cdot \sum_{i\in I} b_i
 \label{eq:eq45}
 \end{equation}

\end{lemma}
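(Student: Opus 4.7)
The plan is to treat the three inequalities in turn, leveraging the two faces of an action lattice: the lattice structure (with $+$ as join and $\cdot$ as meet, and $\leq$ being the induced order $a\leq b \iff a+b=b$) and the residuated-semiring axioms listed at the top of the appendix (associativity, commutativity, unit, and left/right distributivity of $;$ over $+$).

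For (\ref{eq:eq33}) I would unfold $\leq$ via its definition. Assuming $a+b=b$ and $c+d=d$, I would compute
\[
(a+c)+(b+d) \;=\; (a+b)+(c+d) \;=\; b+d,
\]
using only associativity and commutativity of $+$, which immediately gives $a+c\leq b+d$.

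For (\ref{eq:eq36}) the key observation is that $\cdot$ is the meet of the residuated lattice, so $b\cdot c\leq b$ and $b\cdot c\leq c$. I would first establish the auxiliary monotonicity of $;$ in its second argument: if $x\leq y$, then $y=x+y$, whence $a;y = a;(x+y) = (a;x)+(a;y)$ by distributivity, which is exactly $a;x\leq a;y$. Applying this to $b\cdot c\leq b$ and $b\cdot c\leq c$ yields $a;(b\cdot c)\leq a;b$ and $a;(b\cdot c)\leq a;c$, and then the universal property of the meet $\cdot$ gives $a;(b\cdot c)\leq (a;b)\cdot(a;c)$.

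For (\ref{eq:eq45}) I would set $A=\sum_{i\in I}a_i$ and $B=\sum_{i\in I}b_i$, note that $a_i\leq A$ and $b_i\leq B$ for every $i\in I$, and use meet-monotonicity (an immediate consequence of the universal property of $\cdot$) to deduce $a_i\cdot b_i\leq A\cdot B$ for each $i$. An induction on $|I|$ using (\ref{eq:eq33}) together with the idempotence $x+x=x$ of the join (which is just $x\leq x$ rewritten via the definition of $\leq$) then collapses $\sum_{i\in I}(a_i\cdot b_i)$ into a bound of $A\cdot B$, giving the desired inequality. The mildly delicate point here, and the place I expect to spend the most care, is isolating the basic monotonicity facts for $;$ and $\cdot$ from the action-lattice axioms before the inductive argument, since those are used silently throughout but are not listed among the axioms reproduced at the top of the appendix.
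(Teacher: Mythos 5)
Your proposal is correct, but there is nothing in the paper to compare it against: the paper merely states Lemma~\ref{lem:lemma1} in the appendix as a collection of auxiliary facts and then uses it in the proofs of Lemma~\ref{lem:lemma4} without ever proving it. Your argument is the standard order-theoretic one and it goes through with the resources the paper does supply. For (\ref{eq:eq33}) the computation $(a+c)+(b+d)=(a+b)+(c+d)=b+d$ is exactly right. For (\ref{eq:eq36}) you correctly reduce everything to two facts: monotonicity of $;$ in its second argument, which you derive from the listed distributivity axiom $a;(b+c)=(a;b)+(a;c)$, and the universal property of $\cdot$ as the lattice meet (this is indeed what $\cdot$ is in an action lattice in the sense of \cite{onactionalgebras}). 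For (\ref{eq:eq45}) the bound $a_i\cdot b_i\leq A\cdot B$ termwise, followed by induction on $|I|$ using (\ref{eq:eq33}) and idempotence of $+$, is sound; the finiteness hypothesis is what lets you avoid any appeal to completeness. The one point you rightly flag --- that meet-monotonicity and $;$-monotonicity are not among the axioms reproduced at the top of the appendix and must be isolated first --- is exactly the gap the paper leaves silent, so making it explicit is an improvement rather than a detour.
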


\noindent \textbf{Proof of Lemma 2}\label{app:lemma2}\\

\textbf{(2.1)}:

\begin{eqnarray*}
	  \arrayin{
	  	& & w \cmodels \langle \pi \rangle (\rho \vee \rho'))
	 	\just={definition of $\cmodels$}
	 	\sum\limits_{\phi\mid(w,\phi)\in\prgint{\pi}}\Big(\prod\limits_{u\in U}\big(\phi(u); (u\cmodels \rho \vee \rho')\big)\Big)
	 	\just={definition of $\cmodels$}  
	 	\sum\limits_{\phi\mid(w,\phi)\in\prgint{\pi}}\Big(\prod\limits_{u\in W}\big((\phi(u); \\& & 
	 	\big((u\cmodels \rho) + (u\cmodels \rho')\big)\big)\Big)
	 	\just={(\ref{eqn7})}
	 	\sum\limits_{\phi\mid(w,\phi)\in\prgint{\pi}}\Big(\prod\limits_{w'\in W}\big(\phi(w'); (w'\cmodels \rho) + \\ & &\phi(w'); (w'\cmodels \rho')\big)\Big)
	  }
	  \vline \arrayin{ & &
		 \justdois={$\pi_0$ is atomic hence}{$w$ is related with a singleton $\{u\}$}
		 \sum\limits_{\phi\mid(w,\phi)\in\prgint{\pi}}\big(\phi(w'); (w'\cmodels \rho) + \\ & &\phi(w'); (w'\cmodels \rho')\big)
	  \just={by (\ref{eqn1}) and (\ref{eqn2})}
	 	{\scriptstyle
	 	\sum\limits_{\phi\mid(w,\phi)\in\prgint{\pi}}\Big(\sum\limits_{w'\in W}\big(\phi(w'); (w'\cmodels \rho)\big)\Big)} \\ + & & {\scriptstyle\sum_{\phi\mid(w,\phi)\in\prgint{\pi}}\Big(\sum_{w'\in W}\big(\phi(w'); (w'\cmodels \rho')\big)\Big)}
	 	\just={definition of $\cmodels$}
	 	(w\cmodels \langle \pi\rangle \rho) + (w\cmodels \langle \pi\rangle \rho')
	 	\just={definition of $\cmodels$}
	 	(w\cmodels \langle \pi\rangle \rho \vee  \langle \pi\rangle \rho')
	 	}
	 	\end{eqnarray*}

	 	\noindent Therefore, by Lemma \ref{lem:lemma3}, $\langle \pi \rangle (\rho \vee\rho') \leftrightarrow \langle \pi\rangle \rho \vee\langle \pi\rangle \rho$ is valid.
		 \medskip

\textbf{(2.2)}:

\begin{eqnarray*}
\arrayin{
	 	& & (w \cmodels \langle \pi \rangle (\rho \wedge \rho'))
	 	\just={definition of $\cmodels$}
	 	\sum\limits_{\phi\mid(w,\phi)\in\prgint{\pi}}\Big(\sum\limits_{w'\in W}\big(\phi(w'); (w'\cmodels \rho \wedge \rho')\big)\Big)
	 	\just={definition of $\cmodels$}  
	 	\sum\limits_{\phi\mid(w,\phi)\in\prgint{\pi}}\Big(\sum\limits_{w'\in W}\big(\phi(w'); \\ & &  \big((w'\cmodels \rho) \cdot (w'\cmodels \rho')\big)\big)\Big)
	 	\just\leq{by (\ref{eq:eq36}) and (\ref{eq:eq33})}
	 	\sum\limits_{\phi\mid(w,\phi)\in\prgint{\pi}}\Big(\sum\limits_{w'\in W}\big((\phi(w'); (w'\cmodels \rho)) \cdot \\ & &  (\phi(w'); (w'\cmodels \rho'))\big)\Big)
	 	} \vline \arrayin{ & &
	 	\just\leq{by (\ref{eq:eq45}) }
	 	{\scriptstyle \sum\limits_{\phi\mid(w,\phi)\in\prgint{\pi}}\Big(\sum\limits_{w'\in W}\big(\phi(w'); (w'\cmodels \rho)\big)\Big)} \\ \cdot & & {\scriptstyle \sum_{\phi\mid(w,\phi)\in\prgint{\pi}}\Big(\sum_{w'\in W}\big(\phi(w'); (w'\cmodels \rho'))\big)\Big)}
	 	\just={definition of $\cmodels$}
	 	(w\cmodels \langle \pi\rangle \rho) \cdot (w\cmodels \langle \pi\rangle \rho')
	 	\just={definition of $\cmodels$}
	 	(w\cmodels \langle \pi\rangle \rho \wedge  \langle \pi\rangle \rho')}
	 \end{eqnarray*}
Therefore, by Lemma \ref{lem:lemma3}, $\langle \pi \rangle (\rho \wedge \rho')\rightarrow \langle \pi\rangle \rho \wedge\langle \pi\rangle \rho'$ is valid.
\medskip

%\section{Proof of Lemma 5}\label{app:lemma5}

\textbf{(2.3)}:
	  \begin{eqnarray*}
	  \arrayin{
	  	& & (w \cmodels \langle \pi + \pi'\rangle \rho)
	  	\just={definition of $\cmodels$}
	  	\sum\limits_{\phi\mid(w,\phi)\in\prgint{\pi+\pi'}}\Big(\sum\limits_{w'\in W}\big(\phi(w'); (w'\cmodels \rho)\big)\Big)
	 	 \just={definition of $\prgint{\pi+\pi'}$}  
	 	 \sum\limits_{\substack{\phi\mid(w,\phi)\in\prgint{\pi}\\
	 	 \vee \phi\mid(w,\phi)\in\prgint{\pi'}}}\Big(\sum\limits_{w'\in W}\big(\phi(w'); (w'\cmodels \rho)\big)\Big)
	 	 %
	 	 % \just={distributivity (7)}
	 	 %
	 	 }\vline \arrayin{ & &
	 	 \just={definition of $\prgint{\pi+\pi'}$}  
	 	 \sum\limits_{\phi\mid(w,\phi)\in\prgint{\pi}}\Big(\sum\limits_{w'\in W}\big(\phi(w'); (w'\models \rho)\big)\Big)\\ + & & \sum\limits_{\phi\mid(w,\phi)\in\prgint{\pi'}}\Big(\sum\limits_{w'\in W}\big(\phi(w'); (w'\models \rho)\big)\Big)
	 	 \just={definition of $\cmodels$}
	 	 (w\models \langle \pi\rangle \rho) + (w\models \langle \pi'\rangle \rho)
	 	 \just={definition of $\cmodels$}
	 	 (w\models \langle \pi\rangle \rho \vee \langle \pi'\rangle \rho)
	 	 }
	  \end{eqnarray*}
	  Therefore, by Lemma \ref{lem:lemma3}, $\langle \pi + \pi'\rangle \rho \leftrightarrow \langle \pi\rangle \rho \vee  \langle \pi'\rangle \rho$ is valid.
	  \medskip

\textbf{(2.4)}:
\begin{eqnarray*}
		\arrayin{
	 	& & (w\cmodels \langle \pi\rangle \bot)
	 	\just={definition of $\cmodels$}
	 	\sum\limits_{\phi\mid(w,\phi)\in\prgint{\pi}}\Big(\sum\limits_{w'\in W}\big(\phi(w');(w'\cmodels \bot)\big)\Big)
	 	\just={definition of satisfaction}
		\sum\limits_{\phi\mid(w,\phi)\in\prgint{\pi}}\Big(\sum\limits_{w'\in W}\big(\phi(w');\bot\big)\Big)
		}\vline\arrayin{
	 	& & \just={by (\ref{eqn9})}
		\sum\limits_{\phi\mid(w,\phi)\in\prgint{\pi}}\Big(\sum\limits_{w'\in W}\bot\Big)
	 	\just={by (\ref{eqn4})}
		\bot
		}
	 \end{eqnarray*}
Therefore, by Lemma \ref{lem:lemma3}, $\langle \pi\rangle \bot \leftrightarrow \bot$ is valid.
\medskip

%\section{Proof of Lemma 6}\label{app:lemma6}

\textbf{(2.5)}:

\begin{eqnarray*}
	\arrayin{
		& & w\cmodels \langle \pi\cap\pi'\rangle\rho
		\just={definition of $\cmodels$}
		\sum\limits_{\phi\mid(w,\phi)\in\prgint{\pi\cap\pi'}}\Big(\sum\limits_{w'\in W}\big(\phi(w');w'\cmodels\rho\big)\Big)
		\just={definition of $\prgint{\pi\cap\pi'}$}
		\sum\limits_{\phi_1\mid (w,\phi_1)\in \prgint{\pi}}\Big(\sum\limits_{w'\in W}\big(\phi_1(w');w'\cmodels\rho\big)\Big)\\ & &\cdot \sum\limits_{\phi_2\mid(w,\phi_2)\in \prgint{\pi'}}\Big(\sum_{w'\in W}\big(\phi_2(w');w'\cmodels\rho\big)\Big)	
		\just={definition of $\cmodels$}
		(w\cmodels\langle\pi\rangle\rho)\cdot (w\cmodels\langle\pi'\rangle\rho)
		\just={definition of $\cmodels$}
	(w\cmodels\langle\pi\rangle\rho\wedge \langle\pi'\rangle\rho)
	}
	\end{eqnarray*}
	Therefore, by Lemma \ref{lem:lemma3}, $\langle \pi\cap\pi'\rangle\rho\leftrightarrow \langle\pi\rangle\rho\wedge\langle\pi'\rangle\rho$
	 is valid.
	\medskip

%	\section{Proof of Lemma 7}\label{app:lemma7}
 
\textbf{(2.6)}:

	\begin{eqnarray*}
		\arrayin{
& & w\cmodels [\pi+\pi']\rho
\just={definition of $\cmodels$}
\prod\limits_{\phi\mid(w,\phi)\in\prgint{\pi+\pi'}}\Big(\prod\limits_{w'\in W}\big(\phi(w')\to w'\cmodels\rho\big)\Big)
\just={definition of $\prgint{\pi+\pi'}$}
\prod\limits_{\substack{\phi\mid(w,\phi)\in\prgint{\pi}\\
\vee \phi\mid(w,\phi)\in\prgint{\pi'}}}\Big(\prod\limits_{w'\in W}\big(\phi(w')\to w'\cmodels\rho\big)\Big)
		\justdois={$\prod$ iterates over all $\phi$ such that}{$(w,\phi)\in \prgint{\pi} \vee (w,\phi)\in \prgint{\pi'}$}
		\begin{aligned}
			\prod\limits_{\phi\mid(w,\phi)\in\prgint{\pi}}\Big(\prod\limits_{w'\in W}\big(\phi(u)\to w'\cmodels\rho\big)\Big)\\ \cdot
		\prod\limits_{\phi\mid(w,\phi)\in\prgint{\pi'}}\Big(\prod\limits_{w'\in W}\big(\phi(u)\to w'\cmodels\rho\big)\Big)
		\end{aligned}
		\just={definition of $\cmodels$}
		(w\cmodels[\pi]\rho)\cdot(w\cmodels[\pi']\rho)
		\just={definition of $\cmodels$}
		(w\cmodels[\pi]\rho\wedge[\pi']\rho)
}
	\end{eqnarray*}	
\medskip

\textbf{(2.7)}:

	\begin{eqnarray*}
		\arrayin{
		& &	[\pi](\rho\wedge\rho')
		\just={definition of $\cmodels$}
		\prod\limits_{\phi\mid(w,\phi)\in\prgint{\pi}}\Big(\prod\limits_{w'\in W}\big(\phi(w')\to w'\cmodels\rho\wedge\rho'\big)\Big)
		\just={definition of $\cmodels$}
		\prod\limits_{\phi\mid(w,\phi)\in\prgint{\pi}}\Big(\prod\limits_{w'\in W}\big(\phi(w')\to (w'\cmodels\rho\cdot w\cmodels\rho')\big)\Big)
		\just\leq{definition of $\cmodels$}
		\prod\limits_{\phi\mid(w,\phi)\in\prgint{\pi}}\Big(\prod\limits_{w'\in W}\big((\phi(w')\to w'\cmodels\rho)\cdot (\phi(w')\to w'\cmodels\rho')\big)\Big)
		\just={(\ref{eqn1})}
		\begin{aligned}
			\prod\limits_{\phi\mid(w,\phi)\in\prgint{\pi}}\Big(\prod\limits_{w'\in W}\big(\phi(w')\to w'\cmodels\rho\big)\Big)\\ \cdot
		\prod\limits_{\phi\mid(w,\phi)\in\prgint{\pi}}\Big(\prod\limits_{w'\in W}\big(\phi(w')\to w'\cmodels\rho'\big)\Big)
		\end{aligned}
		\just={definition of $\cmodels$}
		(w\cmodels[\pi]\rho)\cdot(w\cmodels[\pi]\rho')
		\just={definition of $\cmodels$}
		(w\cmodels[\pi]\rho\wedge[\pi]\rho')
		}
	\end{eqnarray*}

 \end{document}